\newtheorem{thm}{Theorem}%[section]
\newtheorem{lem}[thm]{Lemma}
\theoremstyle{definition}
\theoremstyle{remark}
\newtheorem{rem}[thm]{Remark}
\newcommand{\mb}{\mathbf}
\begin{document}

\title{One-Shot Mutual Covering Lemma and Marton's Inner Bound with a Common Message}%
%ONE-SHOT MUTUAL COVERING LEMMA AND MARTON'S INNER BOUND WITH A COMMON MESSAGE
\author{\IEEEauthorblockN{Jingbo Liu~~~~~~~~Paul Cuff ~~~~~~~~~Sergio Verd\'{u}}
\IEEEauthorblockA{Dept. of Electrical Eng., Princeton University, NJ 08544\\
\{jingbo,cuff,verdu\}@princeton.edu}}%
\maketitle

\begin{abstract}
By developing one-shot mutual covering lemmas, we derive a one-shot achievability bound for broadcast with a common message which recovers Marton's inner bound (with three auxiliary random variables) in the i.i.d.~case. The encoder employed is deterministic. Relationship between the mutual covering lemma and a new type of channel resolvability problem is discussed.
\end{abstract}

\section{Introduction}
While many problems in network information theory have been successfully solved in the discrete memoryless case using either the method of types or weak/strong typicality (e.g., \cite{el2011network} or \cite{csiszar1981information}), there is a gap between the i.i.d.~assumptions underlying these methods and the nature of sources and channels arising from real world applications. Recent works (e.g.~\cite{polyanskiy2010channel}\cite{Kostina}) have developed new methods to derive tight one-shot achievability bounds for specialized problems. Meanwhile, a natural question is whether there exist general techniques to attach non-asymptotic fundamental limits in multiuser information theory.
By developing one-shot versions of covering and packing lemmas, \cite{verdu2012non} successfully obtained one-shot achievability bounds for various problems (multi-access, Slepian-Wolf, Gelfand-Pinsker, Wyner-Ziv, Ahlsede-K\"{o}rner, and broadcast of private messages) which lead to such non-asymptotic bounds and recover known results in the i.i.d.~case. However the proof of Marton's inner bound of broadcast without public/common messages (the two auxiliary version, \cite[Theorem~8.3]{el2011network}) proceeded by showing the achievability of each corner point, which requires time sharing to recover the full rate region in the i.i.d.~case.

In this paper we develop a one-shot mutual covering lemma so that a one-shot version of Marton's inner bound with a common message (the three auxiliary version originally due to Liang et~al.~\cite{liang2007rate}; see also \cite[Theorem~8.4]{el2011network}) can be obtained without time sharing.

Time sharing may not be satisfactory since it is meaningless in the one-shot case. This is keenly noted by the authors of \cite{yassaee2013technique}, who also observed that the mutual covering lemma~\cite[Lemma~8.1]{el2011network}, a technique for avoiding time sharing in the i.i.d.~case, does not seem to have a one-shot counterpart:
\begin{quote}
\emph{... This is helpful because no one-shot extension of the mutual covering and packing lemma exists... For this reason Verd\'{u}'s result seems to be weaker than ours.}
\end{quote}
The present paper gives a single-shot mutual covering lemma, thereby filling the gap noted by the above remark. We will provide two derivations, one of them is based on a recent generalization of channel resolvability \cite{liu2015resolvability}.

Partly motivated by the desirability of achievability bounds without time sharing, \cite{yassaee2013technique} (see also \cite{yassaee}) proposed a new technique for deriving one-shot achievability bounds using stochastic encoders, thus avoiding the mutual covering lemma and yielding a one-shot bound which recovers the two auxiliary-variable version of Marton's inner bound. However that version of Marton's bound is known to not be tight, whereas the three auxiliary-variable version (Liang-Kramer bound) is still a candidate for the capacity region \cite[Section~8.4]{el2011network}. In \cite{yassaee} a one-shot achievability bound with three auxiliaries is stated without a proof\footnote{The authors of \cite{yassaee} announced that the proof will be included in their future draft.}, which is not easily comparable with our bound. However the main probability terms from the two bounds are the same, so they are equivalent in the second order rate analysis.

\section{One-shot Mutual Covering Lemma}\label{sec1}
We develop one-shot mutual covering lemma(s) in this section, which require a different proof idea than the i.i.d.~case \cite[Appendix~8A]{el2011network} because the empirical distribution is meaningless in the non-block coding case. The main device in our proof is the introduction of an auxiliary random variable $\tilde{V}$, which can be viewed as being ``typical'' with a random codeword from the $U$ codebook.

To begin with, let us introduce the notations of the relative information
\begin{align}
\imath_{P||Q}(x):=\log\frac{{\rm d}P}{{\rm d}Q}(x)
\end{align}
where $P$ and $Q$ are distributions on a same alphabet $\mathcal{X}$ with $\frac{{\rm d}P}{{\rm d}Q}(x)$ being the Radon-Nikodym derivative, and the information density
\begin{align}
\imath_{U;V}(u;v):=\imath_{P_{V|U=u}||P_V}(v).
\end{align}

\begin{lem}\label{lem1}
Fix $P_{UV}$ and let
\begin{align}
P_{U^MV^L}:=P_U\times \dots\times P_U\times P_V\times\dots\times P_V
\end{align}
then
\begin{align}
&\quad\mathbb{P}[\bigcap_{m=1,l=1}^{M,L}\{(U_m,V_l)\notin \mathcal{F}\}]
\nonumber
\\
&\le
\mathbb{P}[(U,V)\notin\mathcal{F}]
+\mathbb{P}\left[\exp(\imath_{U;V}(U;V))>ML\exp(-\gamma)-\delta\right]
\nonumber
\\
&\quad
+\frac{\min\{M,L\}-1}{\delta}
+e^{-\exp(\gamma)}.
\end{align}
for all $\delta,\gamma>0$ and event $\mathcal{F}$.
\end{lem}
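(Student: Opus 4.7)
The plan has three stages, with the key maneuver being the introduction of the auxiliary $\tilde V$ promised by the excerpt.

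\emph{Stage 1 (restriction to typical pairs).} Define the typicality-restricted subset
\[\mathcal{F}^\star := \mathcal{F}\cap\{(u,v):\exp(\imath_{U;V}(u,v))\le ML\exp(-\gamma)-\delta\}.\]
Since $\mathcal{F}^\star\subseteq\mathcal{F}$, the event $\bigcap_{m,l}(U_m,V_l)\notin\mathcal{F}$ is contained in $\bigcap_{m,l}(U_m,V_l)\notin\mathcal{F}^\star$, and a union bound gives
\[\mathbb{P}_{P_{UV}}[(U,V)\notin\mathcal{F}^\star]\le\mathbb{P}[(U,V)\notin\mathcal{F}]+\mathbb{P}[\exp(\imath_{U;V}(U,V))>ML\exp(-\gamma)-\delta],\]
absorbing the first two terms of the target. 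It remains to bound $\mathbb{P}[\bigcap_{m,l}(U_m,V_l)\notin\mathcal{F}^\star]$ by $\mathbb{P}_{P_{UV}}[(U,V)\notin\mathcal{F}^\star]+(\min\{M,L\}-1)/\delta+e^{-\exp(\gamma)}$.

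\emph{Stage 2 (auxiliary $\tilde V$ and exponential bound).} Assume WLOG $M\le L$ (else swap $U,V$). Draw $\tilde m\sim\mathrm{Unif}[M]$ independent of $(U^M,V^L)$ and, given $(U^M,\tilde m)$, draw $\tilde V\sim P_{V\mid U=U_{\tilde m}}$, so $(U_{\tilde m},\tilde V)\sim P_{UV}$ marginally. Decompose
\[\mathbb{P}[\bigcap_{m,l}(U_m,V_l)\notin\mathcal{F}^\star]\le\mathbb{P}_{P_{UV}}[(U,V)\notin\mathcal{F}^\star]+\mathbb{P}[(U_{\tilde m},\tilde V)\in\mathcal{F}^\star,\;\bigcap_{m,l}(U_m,V_l)\notin\mathcal{F}^\star].\]
With $\mathcal{F}^\star(u):=\{v:(u,v)\in\mathcal{F}^\star\}$, conditioning on $U^M$ and using that $V^L$ is iid $P_V$ gives
\[\mathbb{P}\Bigl[\bigcap_{m,l}(U_m,V_l)\notin\mathcal{F}^\star\Bigm| U^M\Bigr]=\bigl(1-P_V(\textstyle\bigcup_m\mathcal{F}^\star(U_m))\bigr)^{L}\le\exp\!\bigl(-L\,P_V(\textstyle\bigcup_m\mathcal{F}^\star(U_m))\bigr).\]
Splitting the joint event on $\{L\,P_V(\bigcup_m\mathcal{F}^\star(U_m))\ge\exp(\gamma)\}$ accounts for the $e^{-\exp(\gamma)}$ term, reducing the task to
\[\mathbb{P}[(U_{\tilde m},\tilde V)\in\mathcal{F}^\star,\;L\,P_V(\textstyle\bigcup_m\mathcal{F}^\star(U_m))<\exp(\gamma)]\le(M-1)/\delta.\]

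\emph{Stage 3 (change of measure via $\tilde V$ plus concentration).} Rewrite $P_V(\mathcal{F}^\star(U_{\tilde m}))=\mathbb{E}_{\tilde V\mid U_{\tilde m}}[\exp(-\imath_{U;V}(U_{\tilde m},\tilde V))\mathbf{1}_{\mathcal{F}^\star}(U_{\tilde m},\tilde V)]$ and use the ceiling $\exp(-\imath_{U;V})\ge 1/(ML\exp(-\gamma)-\delta)$ on $\mathcal{F}^\star$; this extracts one codeword's worth of a lower bound on $L\,P_V(\mathcal{F}^\star(U_{\tilde m}))$, short of $\exp(\gamma)$ by a factor of $M$. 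The deficit is recovered from the remaining $M-1$ codewords: conditional on $(\tilde m,U_{\tilde m},\tilde V)$, the indicators $\mathbf{1}\{(U_m,\tilde V)\in\mathcal{F}^\star\}$ for $m\ne\tilde m$ are iid Bernoullis (the $U_m$'s being independent of $(U_{\tilde m},\tilde V)$), and a one-sided Chebyshev (Cantelli) tail bound on their sum, combined with the slack $\delta$ that appears multiplicatively in the change-of-measure factor $1/(ML\exp(-\gamma)-\delta)$, produces the required $(M-1)/\delta$.

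The principal obstacle is arranging Stage 3 so that $\delta$ enters \emph{linearly} in the denominator rather than quadratically, and so that the asymmetric $\min\{M,L\}-1$ scaling replaces the $M+L$ that a naive symmetric second-moment argument on $\sum_{m,l}\mathbf{1}_{\mathcal{F}^\star}(U_m,V_l)$ would produce. The asymmetric split (conditioning on $U^M$ to pull $L$ into the exponent of $e^{-\exp(\gamma)}$, while concentrating only over the remaining $M-1$ codewords $U_m$, $m\ne\tilde m$), together with the multiplicative, rather than additive, role of $\delta$ in the change-of-measure identity, is what yields the stated bound.
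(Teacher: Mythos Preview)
Stages 1 and 2 are sound and match the paper's introduction of the auxiliary $\tilde V$. The genuine gap is Stage 3. You must show
\[
\mathbb{P}\bigl[(U_{\tilde m},\tilde V)\in\mathcal{F}^\star,\;L\,P_V(\textstyle\bigcup_m\mathcal{F}^\star(U_m))<\exp(\gamma)\bigr]\le\frac{M-1}{\delta},
\]
but the indicators $\mathbf{1}\{(U_m,\tilde V)\in\mathcal{F}^\star\}$ only record whether the single point $\tilde V$ lies in $\mathcal{F}^\star(U_m)$; a Cantelli (or any) tail bound on their sum controls how many of the sets contain $\tilde V$, not the $P_V$-measure of the union. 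The change-of-measure identity you write applies only to the one set $\mathcal{F}^\star(U_{\tilde m})$, and nothing in your sketch converts information about the remaining $M-1$ codewords into an enlargement of $P_V(\bigcup_m\mathcal{F}^\star(U_m))$ by the missing factor of $M$. The event you are trying to bound is a function of $U^M$ alone, so concentration in $\tilde V$ cannot touch it.

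The paper's route avoids this by never working with the set-measure event. It introduces the pointwise density constraint $\imath_{U^M;\tilde V}(U^M;v)\le\log L-\gamma$, i.e.\ $\frac{1}{M}\sum_m\exp(\imath_{U;V}(U_m;v))\le L\exp(-\gamma)$, inside the indicator $\pi(u^M;v)$. The change of measure then gives $L\exp(-\gamma)A_{U^M}\ge\mathbb{E}[\pi(U^M;\tilde V)\mid U^M]$, and the inequality $(1-p\alpha/L)^L\le 1-p+e^{-\alpha}$ (with $p=\mathbb{E}[\pi\mid U^M]$, $\alpha=\exp(\gamma)$) replaces your split on $\{LP_V(\cup)\gtrless\exp(\gamma)\}$. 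After averaging and symmetry one is left with $\mathbb{P}\bigl[\frac{1}{M}\sum_m\exp(\imath_{U;V}(U_m;\tilde V_1))>L\exp(-\gamma)\bigr]$; separating the $m=1$ term (which has law $P_{UV}$) and applying \emph{Markov} to $\sum_{m\ge2}\exp(\imath_{U;V}(U_m;\tilde V_1))$ (expectation $M-1$, since $U_m\perp\tilde V_1$ for $m\ge2$) yields $(M-1)/\delta$ directly. Your upfront restriction to $\mathcal{F}^\star$ is then unnecessary: the ceiling on $\imath_{U;V}$ emerges from the $m=1$ piece of the density sum, not from trimming $\mathcal{F}$.
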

\begin{proof}
Let $(\tilde{V}_1,\dots,\tilde{V}_M)$ be random variables satisfying
\begin{align}
P_{U^M\tilde{V}^M}=P_{UV}\times\dots\times P_{UV},
\end{align}
and define
\begin{align}\label{e_vt}
\tilde{V}:=\tilde{V}_W
\end{align}
where $W$ is equiprobable on $\{1,\dots,M\}$ and independent of $(U^M,\tilde{V}^M)$.
By independence,
\begin{align}
\mathbb{P}[\bigcap_{m=1,l=1}^{M,L}\{(U_m,V_l)\notin \mathcal{F}\}]
=\mathbb{E}\mathbb{P}[\bigcap_{m=1}^{M}\{(U_m,V_1)\notin \mathcal{F}\}|U^M]^L.
\end{align}
Define the functions
\begin{align}
A_{u^M}&:=\mathbb{P}[\bigcup_{m=1}^M\{(u_m,V_1)\in\mathcal{F}\}];
\\
\pi(u^M;v)&:=1\left\{\bigcup_{m=1}^M\{(u_m,v)\in\mathcal{F}\}\right\}
\\
&\quad\cdot1\left\{\imath_{U^M;\tilde{V}}(u^M;v)\le\log L-\gamma\right\}.\label{e_pi2}
\end{align}
The second term in \eqref{e_pi2} will play a role in the change of measure step.
We then obtain
\begin{align}
&\quad[1-A_{U^M}]^L
\\
&=\left[1-\frac{(L\exp(-\gamma)A_{U^M})\exp(\gamma)}{L}\right]^L
\\
&\le \left[1-\frac{\exp(\gamma)}{L}\mathbb{E}[\pi(U^M;\tilde{V})|U^M]\right]^L\label{e8}
\\
&\le 1-\mathbb{E}[\pi(U^M;\tilde{V})|U^M]
+e^{-\exp(\gamma)},\label{e_12}
\end{align}
where
\begin{itemize}
\item \eqref{e8} is from the fact that (by change of measure)
\begin{align}
L\exp(-\gamma)A_{U^M}&\ge
L\exp(-\gamma)\mathbb{E}[\pi(U^M;V_1)|U^M]
\\
&\ge
\mathbb{E}[\pi(U^M;\tilde{V})|U^M].
\label{e_cm}
\end{align}
\item \eqref{e_12} uses the inequality
\begin{align}\label{e_15}
(1-\frac{p\alpha}{M})^M\le 1-p+e^{-\alpha}
\end{align}
for $M,\alpha>0$ and $0\le p\le 1$.
\end{itemize}
Thus
\begin{align}
&\quad\mathbb{E}[1-A_{U^M}]^L
\\
&\le\mathbb{E}[1-\pi(U^M;\tilde{V})]
+e^{-\exp(\gamma)}
\\
&=\frac{1}{M}\sum_{w=1}^M\mathbb{E}[1-\pi(U^M;\tilde{V}_W)|W=w]
+e^{-\exp(\gamma)}\label{e13}
\\
&=\mathbb{E}[1-\pi(U^M;\tilde{V}_1)]
+e^{-\exp(\gamma)}\label{e14}
\\
&\le \mathbb{P}[(U_1,\tilde{V}_1)\notin\mathcal{F},
\textrm{or~}\imath_{U^M;\tilde{V}}(U^M;\tilde{V}_1)>\log L-\gamma]
+e^{-\exp(\gamma)}\label{e_bu}
\\
&\le \mathbb{P}[(U,V)\notin\mathcal{F}]
\nonumber
\\
&\quad+\mathbb{P}\left[\frac{1}{M}\sum_{m=1}^M \frac{{\rm d}P_{V|U=U_m}}{{\rm d}P_V}(\tilde{V}_1)>L\exp(-\gamma)\right]
+e^{-\exp(\gamma)}\label{e16}
\\
&\le \mathbb{P}[(U,V)\notin\mathcal{F}]
+\mathbb{P}\left[\frac{{\rm d}P_{V|U=U_1}}{{\rm d}P_V}(\tilde{V}_1)>ML\exp(-\gamma)-\delta\right]
\nonumber
\\
&\quad
+\mathbb{P}\left[\sum_{m=2}^M \frac{{\rm d}P_{V|U=U_m}}{{\rm d}P_V}(\tilde{V}_1)>\delta\right]
+e^{-\exp(\gamma)}\label{e17}
\\
&\le \mathbb{P}[(U,V)\notin\mathcal{F}]
+\mathbb{P}\left[\frac{{\rm d}P_{V|U=U_1}}{{\rm d}P_V}(\tilde{V}_1)>ML\exp(-\gamma)-\delta\right]
\nonumber
\\
&\quad
+\frac{M-1}{\delta}
+e^{-\exp(\gamma)}\label{e18}
\end{align}
where
\begin{itemize}
  \item \eqref{e14} is because by symmetry, all summands in \eqref{e13} are equal.
  \item \eqref{e16} is because by definition, $P_{U_1\tilde{V}_1}=P_{UV}$.
  \item \eqref{e17} uses the elementary inequality $1\{x+y>a+b\}\le 1\{x>a\}+1\{y>b\}$.
  \item \eqref{e18} uses the Markov inequality.
\end{itemize}
\end{proof}

\begin{rem}
Supposing that $M\le L$, an optimal choice of $(\gamma,\delta)$ is the solution to the optimization problem
\begin{align}\label{opt}
\textrm{minimize}&\quad f(\gamma,\delta):=\frac{M-1}{\delta}+e^{-\exp(\gamma)}
\\
\textrm{subject to}&\quad g(\gamma,\delta):=ML\exp(-\gamma)-\delta=\lambda
\end{align}
for some constant $\lambda\in\mathbb{R}$. Using Lagrange multipliers
%we see the solution to \eqref{opt} satisfies
%\begin{align}
%\partial_\gamma f(\gamma,\delta)\partial_\delta g(\gamma,\delta)
%=\partial_\gamma g(\gamma,\delta)\partial_\delta f(\gamma,\delta)
%\end{align}
%where
%\begin{align}
%\partial_{\gamma}f(\gamma,\delta)&=e^{-\exp(\gamma)}\left(-\frac{\exp(\gamma)}{\log e}\right)
%\\
%\partial_{\delta}f(\gamma,\delta)&=-\frac{M-1}{\delta^2}
%\\
%\partial_{\gamma}g(\gamma,\delta)&=-ML\exp(-\gamma)\frac{1}{\log e}
%\\
%\partial_{\delta}g(\gamma,\delta)&=-1
%\end{align}
we can show that the optimal value of $\delta$ for fixed $\gamma$ is:
\begin{align}\label{e28}
\delta=\sqrt{(M-1)ML}\exp(-\gamma)e^{\frac{1}{2}\exp(\gamma)}.
\end{align}
\end{rem}
%
%With the optimal choice of $\delta$ in \eqref{e28}, Lemma~\ref{lem1} becomes
%\begin{lem}\label{lem2}
%Under the same assumptions of Lemma~\ref{lem1},
%\begin{align}
%&\quad\mathbb{P}[\bigcap_{m=1,l=1}^{M,L}\{(U_m,V_l)\notin \mathcal{F}\}]
%\nonumber
%\\
%&\le
%\mathbb{P}[(U,V)\notin\mathcal{F}]
%\nonumber
%\\
%&\quad+\mathbb{P}\left[\imath_{U;V}(U;V)>\log(ML)-\gamma\right.
%\nonumber
%\\
%&\quad\quad\left.+\log\left(1-\sqrt{\frac{\min\{M,L\}-1}{ML}}e^{\frac{1}{2}\exp(\gamma)}\right)\right]
%\nonumber
%\\
%&\quad+\sqrt{\frac{\min\{M,L\}-1}{ML}}\exp(\gamma)e^{-\frac{1}{2}\exp(\gamma)}
%+e^{-\exp(\gamma)}.
%\end{align}
%for all $\delta,\gamma>0$ and event $\mathcal{F}$.
%\end{lem}

Note that Lemma~\ref{lem1} recovers the covering lemma in \cite{verdu2012non} when $\min\{M,L\}=1$. Also, Lemma~\ref{lem1} can be weakened to the following simpler bound by setting $\delta=ML(\exp(-\gamma)-\exp(-2\gamma))$.
\begin{lem}\label{lem4}
Under the same assumptions of Lemma~\ref{lem1},
\begin{align}
&\quad\mathbb{P}[\bigcap_{m=1,l=1}^{M,L}\{(U_m,V_l)\notin \mathcal{F}\}]
\nonumber
\\
&\le
\mathbb{P}[(U,V)\notin\mathcal{F}]
+\mathbb{P}\left[\imath_{U;V}(U;V)>\log(ML)-2\gamma\right]
\nonumber
\\
&\quad+\frac{\min\{M,L\}-1}{ML(\exp(-\gamma)-\exp(-2\gamma))}
+e^{-\exp(\gamma)}.
\end{align}
for all $\gamma>0$ and event $\mathcal{F}$.
\end{lem}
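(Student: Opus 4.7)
The plan is to invoke Lemma~\ref{lem1} and make a judicious choice of the free parameter $\delta$ that converts the bound into the simpler form stated. The substitution to try is $\delta = ML(\exp(-\gamma)-\exp(-2\gamma))$, motivated by the fact that this choice makes the threshold appearing inside the second probability term collapse to an expression involving $\imath_{U;V}$ through $\log(ML)-2\gamma$ alone, eliminating the two-parameter $(\gamma,\delta)$ dependence from the event.

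First I would verify that $\delta>0$ whenever $\gamma>0$, since $\exp(-\gamma)>\exp(-2\gamma)$ in that range, so the hypothesis of Lemma~\ref{lem1} is met. Next I would compute the threshold
\begin{align}
ML\exp(-\gamma)-\delta = ML\exp(-2\gamma),
\end{align}
so the event $\{\exp(\imath_{U;V}(U;V)) > ML\exp(-\gamma)-\delta\}$ rewrites as $\{\imath_{U;V}(U;V) > \log(ML) - 2\gamma\}$, which is the second probability in Lemma~\ref{lem4}. Substituting $\delta$ into the $\frac{\min\{M,L\}-1}{\delta}$ term produces the third summand in the stated bound, while $\mathbb{P}[(U,V)\notin\mathcal{F}]$ and $e^{-\exp(\gamma)}$ are carried over unchanged from Lemma~\ref{lem1}.

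There is essentially no real obstacle here: the whole argument is a transparent substitution that trades one degree of freedom ($\delta$) for a cleaner dependence on the information density. The only aspect requiring a small amount of insight is recognizing which value of $\delta$ produces the desired simplification, namely one for which $ML\exp(-\gamma)-\delta$ factors as $ML$ times a pure function of $\gamma$; once this form is guessed, verification is purely mechanical and no probabilistic content beyond Lemma~\ref{lem1} itself is invoked.
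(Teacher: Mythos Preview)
Your proposal is correct and matches the paper's own approach exactly: the paper states (immediately before Lemma~\ref{lem4}) that the bound follows from Lemma~\ref{lem1} by setting $\delta = ML(\exp(-\gamma)-\exp(-2\gamma))$, which is precisely the substitution you carry out and verify.
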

\begin{rem}\label{rem5}
In all versions of the mutual covering lemmas given above, the sum of the two probability terms can be strengthened to the probability of the union of two events, if the union bound is not applied to simplify \eqref{e_bu}.
\end{rem}
A conditional version of the mutual covering lemma (c.f.~Lemma 8.1 in \cite{el2011network}) also follows by averaging:
\begin{lem}\label{lem5}
Fix $P_{UST}$ and let
\begin{align}
P_{US^MT^L}:=P_U\times P_{S|U}\times \dots\times P_{S|U}\times P_{T|U}\times\dots\times P_{T|U}
\end{align}
then
\begin{align}
&\quad\mathbb{P}[\bigcap_{m=1,l=1}^{M,L}\{(U,S_m,T_l)\notin \mathcal{F}\}]
\nonumber
\\
&\le
\mathbb{P}[\{(U,S,T)\notin\mathcal{F}\}
\cup\{\imath_{S;T|U}(S;T|U)>\log(ML)-2\gamma\}]
\nonumber
\\
&\quad+\frac{\min\{M,L\}-1}{ML(\exp(-\gamma)-\exp(-2\gamma))}
+e^{-\exp(\gamma)}.
\end{align}
for all $\gamma>0$ and event $\mathcal{F}$.
\end{lem}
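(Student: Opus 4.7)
The plan is to reduce this conditional version to the unconditional Lemma~\ref{lem4} by conditioning on $U$ and then averaging. First, fix an outcome $u$ of $U$ and define the sliced event
\[
\mathcal{F}_u := \{(s,t) : (u,s,t) \in \mathcal{F}\}.
\]
Given $U=u$, the codewords $S_1,\dots,S_M$ are i.i.d.\ with distribution $P_{S|U=u}$, the codewords $T_1,\dots,T_L$ are i.i.d.\ with distribution $P_{T|U=u}$, and the two codebooks are conditionally independent. This is precisely the setting of Lemma~\ref{lem4} after relabeling $(P_U,P_V,\mathcal{F})$ as $(P_{S|U=u},P_{T|U=u},\mathcal{F}_u)$.

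Next, I would invoke Lemma~\ref{lem4}, strengthened via Remark~\ref{rem5} so that the first two probability terms appear as the probability of a single union event. The information density arising in the resulting bound is computed under $(P_{S|U=u},P_{T|U=u})$, which by the definition of conditional information density equals $\imath_{S;T|U}(s;t|u)$. This produces, for every fixed $u$, a bound on the conditional probability $\mathbb{P}[\bigcap_{m,l}\{(U,S_m,T_l)\notin\mathcal{F}\}\mid U=u]$ whose right-hand side consists of one conditional probability (of the union event) plus the deterministic residuals $\frac{\min\{M,L\}-1}{ML(\exp(-\gamma)-\exp(-2\gamma))}$ and $e^{-\exp(\gamma)}$.

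Finally, taking expectation over $U$ on both sides converts the conditional probabilities into unconditional ones, while the deterministic residuals (which depend only on $M$, $L$, $\gamma$) pass through unchanged, yielding the stated inequality. No step is delicate; the only bookkeeping to check is that the information density for the conditional marginals agrees with the conditional information density notation in the statement, which is immediate from the definition.
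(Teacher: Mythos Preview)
Your proposal is correct and matches the paper's approach exactly: the paper simply states that Lemma~\ref{lem5} ``follows by averaging,'' which is precisely your argument of conditioning on $U=u$, applying Lemma~\ref{lem4} (in the union-event form of Remark~\ref{rem5}) to the conditionally i.i.d.\ codebooks, and then taking expectation over $U$.
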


\section{Mutual Covering from Resolvability}\label{sec_3}
While the mutual covering lemma in Section~\ref{sec1} suffices the purpose of Section~\ref{sec4}, in this section we provide a simple alternative derivation based on a recent result on resolvability in the excess information metric \cite{liu2015resolvability}, which illustrates the interesting connection between resolvability and the mutual covering lemma\footnote{Indeed, \cite{cuff2012distributed} uses the term ``soft-covering'' for the achievability part of channel resolvability \cite{han1993approximation}, an idea that traces back to Wyner \cite{wyner1975common}.}.
\begin{lem}{\cite{liu2015resolvability}}\label{lem6}
Fix $P_{UV}=P_UP_{V|U}$. Let $\mb{c}=[c_1,\dots,c_M]$ be i.i.d.~according to $P_U$. Define
\begin{align}\label{e_hv}
\hat{P}_V:=\frac{1}{M}\sum_{m=1}^M P_{V|U=c_m}.
\end{align}
Then for any $\lambda>2$,
\begin{align}
&\quad\mathbb{E}\mathbb{P}[\imath_{\hat{P}_V||P_V}(\hat{V})>\log\lambda]
\nonumber
\\
&\le\mathbb{P}\left[\imath_{V;U}(V;U)\ge\log\frac{M\lambda}{2}\right]+\frac{2}{\lambda}
\end{align}
where the expectation is with respect to the codebook realization, $\hat{V}\sim \hat{P}_V$ and $(U,V)\sim P_{UV}$.
\end{lem}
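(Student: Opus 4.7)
The plan is to unpack the likelihood ratio $\mathrm{d}\hat P_V/\mathrm{d}P_V$ as an arithmetic mean of $M$ Radon--Nikodym derivatives, then use a symmetrization trick to isolate one ``selected'' codeword from the remaining $M-1$. This symmetrization parallels the introduction of the auxiliary $\tilde V$ in the proof of Lemma~\ref{lem1}: I would realize the random draw $\hat V \sim \hat P_V$ by first picking an index $W$ uniformly on $\{1,\dots,M\}$ independently of the codebook $\mathbf c$, and then drawing $\hat V \sim P_{V|U=c_W}$. Under this realization, $(c_W,\hat V)$ has the joint law $P_{UV}$, while every $c_m$ with $m\ne W$ is independent of $(c_W,\hat V)$ and distributed as $P_U$.

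With this setup, note that
\begin{align}
\exp(\imath_{\hat P_V\|P_V}(\hat V))
=\frac{1}{M}\sum_{m=1}^M\exp(\imath_{U;V}(c_m;\hat V))
=: \frac{S}{M},
\end{align}
so the target event is $\{S>M\lambda\}$. I split $S=S_W+S_{-W}$, where $S_W:=\exp(\imath_{U;V}(c_W;\hat V))$ and $S_{-W}:=\sum_{m\ne W}\exp(\imath_{U;V}(c_m;\hat V))$, and apply the elementary union bound $\{S>M\lambda\}\subseteq \{S_W>M\lambda/2\}\cup\{S_{-W}>M\lambda/2\}$.

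For the first event, I use $(c_W,\hat V)\sim P_{UV}$ to rewrite
\begin{align}
\mathbb{P}[S_W>M\lambda/2]
=\mathbb{P}[\imath_{U;V}(U;V)>\log(M\lambda/2)],
\end{align}
which is majorized by the first term on the right of the lemma. For the second event, conditioning on $\hat V$ and using that each $c_m$ with $m\ne W$ is independent of $\hat V$ and marginally $P_U$,
\begin{align}
\mathbb{E}\!\left[\exp(\imath_{U;V}(c_m;\hat V))\,\middle|\,\hat V\right]
=\int P_U(\mathrm{d}u)\,\frac{\mathrm{d}P_{V|U=u}}{\mathrm{d}P_V}(\hat V)=1.
\end{align}
Summing $M-1$ such terms gives $\mathbb{E}[S_{-W}]=M-1$, and Markov's inequality yields $\mathbb{P}[S_{-W}>M\lambda/2]\le 2(M-1)/(M\lambda)\le 2/\lambda$. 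Adding the two contributions gives exactly the claimed bound.

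The only step that requires any ingenuity is the symmetrization via $W$, since once the ``self'' term $S_W$ is isolated, the other $M-1$ terms become a sum of conditionally independent non-negative random variables with explicit mean, and everything is handled by a union bound plus Markov. Consequently I do not anticipate a genuine obstacle; the derivation is essentially the same pigeonhole-on-$W$ idea used in Lemma~\ref{lem1}, stripped down to the resolvability setting where no set $\mathcal F$ is involved.
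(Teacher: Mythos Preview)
Your argument is correct. The realization of $\hat V$ via a uniform index $W$ yields the coupling $(c_W,\hat V)\sim P_{UV}$ while $\{c_m\}_{m\ne W}$ remain i.i.d.\ $P_U$ independent of $(c_W,\hat V)$; the split $S=S_W+S_{-W}$ with a union bound then gives exactly the two target terms, the second controlled by $\mathbb{E}[S_{-W}]=M-1$ and Markov.

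As for comparison with the paper: the paper does not actually prove Lemma~\ref{lem6} here. It imports the bound from \cite{liu2015resolvability} by specializing parameters in a more general resolvability result stated there, so there is no self-contained argument to compare with. Your direct proof is precisely the stripped-down version of the $\tilde V$-symmetrization used in Lemma~\ref{lem1} (the paper itself remarks at the end of Section~\ref{sec_3} that $\tilde V$ conditioned on $U^M=c^M$ has law $\hat P_V$), so in spirit it is the same device; what you gain is a self-contained one-page proof that avoids invoking the external reference, at the cost of not exhibiting the more general resolvability framework from which the paper prefers to deduce it.
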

The above lemma follows by setting $\gamma=\lambda$, $\sigma=\epsilon\uparrow\frac{\lambda}{2}$, $\delta\uparrow1$, $Q_{XU}=P_{VU}$, $\pi_X=P_V$, $L=M$ and $P_X=\hat{P}_V$ in \cite[Remark~3]{liu2015resolvability}. We then have
\begin{lem}\label{lem7}
Under the same assumptions of Lemma~\ref{lem1},
\begin{align}
&\quad\mathbb{P}[\bigcap_{m=1,l=1}^{M,L}\{(U_m,V_l)\notin \mathcal{F}\}]
\nonumber
\\
&\le
\mathbb{P}[(U,V)\notin\mathcal{F}]
+\mathbb{P}[\imath_{U;V}(U;V)\ge\log ML-\gamma]
\nonumber
\\
&\quad+\frac{\exp(\gamma)}{\max\{M,L\}}+e^{-\frac{1}{2}\exp(\gamma)}.
\label{e_39}
\end{align}
\end{lem}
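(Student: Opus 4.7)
The plan is to follow the skeleton of the proof of Lemma~\ref{lem1} but to replace its Markov-inequality estimates on the mixture $\frac{1}{M}\sum_m \mathrm{d}P_{V|U=U_m}/\mathrm{d}P_V$ by a single application of Lemma~\ref{lem6}, which gives a sharper control on the resolvability of this mixture. By the symmetry of the statement under $(U,V,M,L)\mapsto(V,U,L,M)$, I will assume without loss of generality that $L\ge M$, so that $\max\{M,L\}=L$; the other case then follows by swapping the two codebooks.

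First, I will condition on $U^M$ and write the left-hand side as $\mathbb{E}[(1-A_{U^M})^L]$, where $A_{U^M}:=\mathbb{P}[\bigcup_m\{(U_m,V_1)\in\mathcal{F}\}\mid U^M]$, and then reintroduce (exactly as in \eqref{e_vt}) an auxiliary $\tilde V$ distributed, conditionally on $U^M$, as the mixture $\hat{P}_V:=\frac{1}{M}\sum_{m=1}^M P_{V|U=U_m}$ that appears in Lemma~\ref{lem6}; equivalently, draw $W$ uniform on $\{1,\dots,M\}$ and set $\tilde V\sim P_{V|U=U_W}$, so that $(U_W,\tilde V)\sim P_{UV}$. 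The key identification is
\begin{align*}
\imath_{U^M;\tilde V}(U^M;\tilde V)=\imath_{\hat{P}_V\|P_V}(\tilde V),
\end{align*}
which aligns the information density driving Lemma~\ref{lem1}'s change of measure with the resolvability divergence controlled by Lemma~\ref{lem6}.

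Next, with $\mathcal{H}_{U^M}:=\{v:\imath_{\hat{P}_V\|P_V}(v)\le\log\lambda\}$ for a parameter $\lambda>0$, a change of measure on $\mathcal{H}_{U^M}$ (where $\mathrm{d}\hat{P}_V\le\lambda\,\mathrm{d}P_V$) will give $A_{U^M}\ge\lambda^{-1}\pi_{U^M}$, with
\begin{align*}
\pi_{U^M}:=\mathbb{P}[(U_W,\tilde V)\in\mathcal{F},\,\tilde V\in\mathcal{H}_{U^M}\mid U^M].
\end{align*}
Applying inequality \eqref{e_15} with outer exponent $L$ and parameters $(p,\alpha)=(\pi_{U^M},L/\lambda)$ then yields the pointwise bound $(1-A_{U^M})^L\le 1-\pi_{U^M}+e^{-L/\lambda}$.

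Finally, taking expectations, using $(U_W,\tilde V)\sim P_{UV}$ and a union bound to split $1-\mathbb{E}[\pi_{U^M}]$ into $\mathbb{P}[(U,V)\notin\mathcal{F}]+\mathbb{P}[\tilde V\notin\mathcal{H}_{U^M}]$, and bounding the latter directly by Lemma~\ref{lem6}, I expect to arrive at
\begin{align*}
&\mathbb{E}[(1-A_{U^M})^L]\le\mathbb{P}[(U,V)\notin\mathcal{F}]\\
&\quad+\mathbb{P}\!\left[\imath_{U;V}(U;V)\ge\log\tfrac{M\lambda}{2}\right]+\tfrac{2}{\lambda}+e^{-L/\lambda}.
\end{align*}
Choosing $\lambda=2Le^{-\gamma}$ sends $\log(M\lambda/2)\to\log ML-\gamma$, $2/\lambda\to e^\gamma/L=e^\gamma/\max\{M,L\}$, and $e^{-L/\lambda}\to e^{-\frac{1}{2}e^\gamma}$, reproducing \eqref{e_39}. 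The only conceptually nontrivial step is recognising that the conditional distribution of $\tilde V$ given the $U$-codebook is precisely the mixture $\hat{P}_V$ in Lemma~\ref{lem6}; once that identification is made, I do not anticipate any further obstacle beyond bookkeeping the parameter $\lambda$.
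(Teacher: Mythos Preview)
Your proposal is correct and follows essentially the same route as the paper's proof: assume $L\ge M$, lower-bound $A_{U^M}=P_V(\mathcal{A}_{U^M})$ via the mixture $\hat{P}_V$, apply inequality \eqref{e_15}, take expectations, invoke Lemma~\ref{lem6}, and set $\lambda=2Le^{-\gamma}$. The only cosmetic difference is that the paper phrases the change-of-measure step through the Neyman--Pearson lemma (obtaining $\lambda P_V(\mathcal{A}_{\mb{c}})\ge\hat{P}_V(\mathcal{A}_{\mb{c}})-\mathbb{P}[\imath_{\hat{P}_V\|P_V}(\hat{V})>\log\lambda]$ and then using $\mathcal{F}_{c_m}\subseteq\mathcal{A}_{\mb{c}}$), whereas you bound $\pi_{U^M}\le\hat{P}_V(\mathcal{A}_{U^M}\cap\mathcal{H}_{U^M})\le\lambda P_V(\mathcal{A}_{U^M})$ directly on the level set $\mathcal{H}_{U^M}$; the paper itself remarks that these two formulations are equivalent.
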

\begin{rem}
The bound in Lemma~\ref{lem7} appears similar to and slightly simpler than Lemma~\ref{lem4}. However the sum of the two probabilities in \eqref{e_39} cannot be easily strengthened to the probability of a union (see Remark~\ref{rem5}), which is important in the second order rate analysis.
\end{rem}
\begin{rem}
In both derivations of the one-shot mutual covering lemma, the role of $U$ and $V$ are asymmetric. Moreover, the two methods are not readily extendable to obtain a one-shot version of the multivariate covering lemma \cite[Lemma~8.2]{el1981proof}.
\end{rem}
\begin{proof}[Proof of Lemma~\ref{lem7}]
Assume without loss of generality that $L\ge M$. Define the sets
\begin{align}
\mathcal{F}_{u}&:=\{v:(u,v)\in\mathcal{F}\},
\\
\mathcal{A}_{u^M}&:=\bigcup_{m=1}^M \mathcal{F}_u.
\end{align}
For fixed $\mb{c}=c^M$ in Lemma~\ref{lem6},
\begin{align}
\hat{P}_V(\mathcal{A}_{\mb{c}})-\lambda P_V(\mathcal{A}_{\mb{c}})
&\le \sup_{\mathcal{A}}[\hat{P}_V(\mathcal{A})-\lambda P_V(\mathcal{A})]
\label{e_np1}
\\
&=\mathbb{P}[\frac{{\rm d}\hat{P}_V}{{\rm d}P_V}(\hat{V})>\lambda]
-\lambda\mathbb{P}[\frac{{\rm d}\hat{P}_V}{{\rm d}P_V}(V)>\lambda]
\label{e_43}
\\
&\le\mathbb{P}[\frac{{\rm d}\hat{P}_V}{{\rm d}P_V}(\hat{V})>\lambda]
\label{e_np}
\end{align}
where \eqref{e_43} is from Neyman-Pearson lemma. Thus
\begin{align}
\lambda P_V(\mathcal{A}_{\mb{c}})
&\ge \hat{P}_V(\mathcal{A}_{\mb{c}})-\mathbb{P}[\frac{{\rm d}\hat{P}_V}{{\rm d}P_V}(\hat{V})>\lambda]
\\
&\ge
\frac{1}{M}\sum_{m=1}^MP_{V|U=c_m}(\mathcal{F}_{c_m})-\mathbb{P}[\frac{{\rm d}\hat{P}_V}{{\rm d}P_V}(\hat{V})>\lambda]\label{e_46}
\end{align}
where \eqref{e_46} is from $\mathcal{F}_{c_m}\subseteq \mathcal{A}_{\mb{c}}$.
Denote by $p_{\mb{c}}$ the right hand side of \eqref{e_46}.
Then $p_{\mb{c}}\le1$, and setting $\mb{c}=U^M$ we obtain
\begin{align}
\mathbb{P}[\bigcap_{l=1}^L\{(U_m,V_l)\notin\mathcal{F}\}|U^M]
&=\left[1-P_V(\mathcal{A}_{U^M})\right]^L
\nonumber\\
&\le\left[1-\frac{p_{\mb{c}}\frac{L}{\lambda}}{L}\right]^L
\label{e_47}
\\
&\le 1-p_{\mb{c}}+e^{-\frac{L}{\lambda}}\label{e_48}
\end{align}
where \eqref{e_48} uses \eqref{e_15}. Then the result follows by unconditioning $U^M$ on both sides of \eqref{e_48}, applying \eqref{e_46} and Lemma~\ref{lem6}, and setting $\lambda=2L\exp(-\gamma)$.
\end{proof}
While the derivations of the one-shot mutual covering lemma in Sections~\ref{sec1} and \ref{sec_3} follow different routes, their correspondences can be seen in the following ways:
\begin{enumerate}
  \item The auxiliary random variable $\tilde{V}$, which is the main device in the first proof, has the distribution $\hat{P}_V$ as in \eqref{e_hv} conditioned on $U^M=c^M$.
  \item The change of measure steps \eqref{e_cm} can be related to the Neyman-Pearson lemma \eqref{e_np1}-\eqref{e_np}.
  \item Both derivations relies on the inequality \eqref{e_15} which also appeared in the proof of the standard covering lemma.
\end{enumerate}
Although Lemma~\ref{lem6} implies the one-shot mutual covering lemma, the reverse implication does not seem to follow directly. Thus resolvability in the excess information is a stronger result than the mutual covering lemma.

\section{Inner Bound with a Common Message}\label{sec4}
We prove a single shot version of the asymptotic achievability result of Liang-Kramer \cite[Theorem~5]{liang2007rate} (see also \cite[Theorem~8.4]{el2011network}). This region is equivalent to an inner bound obtained by Gelfand and Pinsker \cite{gel1980capacity} upon optimization (see \cite{liang2011equivalence} or \cite[Remark~8.6]{el2011network}).
\begin{thm}\label{thm1}
Fix arbitrary distributions $P_{Y_1Y_2|X}$, $P_{UST}$, a map $x:\mathcal{U}\times\mathcal{S}\times\mathcal{T}\to\mathcal{X}$, and integers $M_0$, $M_{10}$, $M_{20}$, $N$, $L$, $\hat{N}$ and $\hat{L}$. Set
\begin{align}
M&:=M_0M_{10}M_{20};
\\
M_1&:=M_{10}N;
\\
M_2&:=M_{20}L;
\\
\tilde{N}&:=\hat{N}N;
\\
\tilde{L}&:=\hat{L}L.
\end{align}
Then there exists an $(M_0,M_1,M_2,\epsilon_1,\epsilon_2)$ code with
\begin{align}
&\quad\max\{\epsilon_1,\epsilon_2\}
\nonumber
\\
&\le 2\exp(-\gamma)+e^{-\exp(\gamma)}
\\
&\quad+\mathbb{P}[\{\imath_{US;Y_1}(US;Y_1)\le \log M\tilde{N}+\gamma\}
\nonumber
\\
&\quad\quad \cup\{\imath_{UT;Y_2}(UT;Y_2)\le \log M\tilde{L}+\gamma\}
\nonumber
\\
&\quad\quad \cup\{\imath_{S;Y_1|U}(S;Y_1|U)\le \log \tilde{N}+\gamma\}
\nonumber
\\
&\quad\quad \cup\{\imath_{T;Y_2|U}(T;Y_2|U)\le \log \tilde{L}+\gamma\}
\nonumber
\\
&\quad\quad \cup\{\imath_{S;T|U}(S;T|U)>\log(\hat{N}\hat{L})-2\gamma\}]
\nonumber
\\
&\quad+\frac{\min\{\hat{N},\hat{L}\}-1}{\hat{N}\hat{L}(\exp(-\gamma)-\exp(-2\gamma))}\label{ebd}
\end{align}
where $P_{USTXY_1Y_2}:=P_{UST}P_{X|UST}P_{Y_1Y_2|X}$.
\end{thm}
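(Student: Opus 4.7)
The plan is to adapt the Liang-Kramer Marton coding scheme to the one-shot regime, using the conditional mutual covering Lemma~\ref{lem5} to handle the encoder's covering step and a one-shot change-of-measure packing argument for each decoder. I would first generate the codebook at random: draw $u(m_0,m_{10},m_{20})$ i.i.d.\ from $P_U$, and conditional on each realized $u$ draw $s(j,n)\sim P_{S|U=u}$ and $t(k,l)\sim P_{T|U=u}$ i.i.d.\ across $(j,n)\in[\hat{N}]\times[N]$ and $(k,l)\in[\hat{L}]\times[L]$. Given a message $(m_0,(m_{10},n),(m_{20},l))$, the deterministic encoder sets $u^* := u(m_0,m_{10},m_{20})$ and scans the grid $\{(s(j,n),t(k,l))\colon j\in[\hat{N}], k\in[\hat{L}]\}$ for the first pair $(j^*,k^*)$ whose triple $(u^*, s(j^*,n), t(k^*,l))$ lies in the admissible set $\{(u,s,t)\colon \imath_{S;T|U}(s;t|u)\le \log(\hat{N}\hat{L})-2\gamma\}$, then transmits $x(u^*, s(j^*,n), t(k^*,l))$.

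For the decoders I would use a combined threshold test. Decoder~1 declares $(\hat{m}_0,\hat{m}_{10},\hat{n})$ if there exist indices $m_{20}',j'$ such that the pair $(u',s'):=(u(\hat{m}_0,\hat{m}_{10},m_{20}'), s(j',\hat{n}))$ passes \emph{both} tests $\imath_{US;Y_1}(u',s';Y_1)>\log(M\tilde{N})+\gamma$ and $\imath_{S;Y_1|U}(s';Y_1|u')>\log\tilde{N}+\gamma$; Decoder~2 is symmetric. Applying Lemma~\ref{lem5} conditionally on $u^*$ to the $\hat{N}\times\hat{L}$ sub-grid bounds the covering-failure probability by event~5 of the statement together with the fractional term and $e^{-\exp(\gamma)}$. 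For Decoder~1, the transmitted pair's failing either threshold contributes events~1 and~3; false alarms from incorrect pairs split into (a) those with $u\neq u^*$, of which there are at most $M\tilde{N}$, all independent of $Y_1$, yielding $\le\exp(-\gamma)$ total by change of measure against the first threshold, and (b) those with $u=u^*$ but $s$ having the wrong $n$-index, of which there are at most $\tilde{N}$, conditionally independent of $Y_1$ given $u^*$, yielding another $\exp(-\gamma)$ by conditional change of measure against the second threshold. Decoder~2 contributes events~2 and~4 together with a symmetric $2\exp(-\gamma)$ packing term, so taking the maximum over the two decoders preserves the $2\exp(-\gamma)$ prefactor. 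A union bound over all error sources, averaging over the random codebook, and the standard existence argument then assemble the claimed bound on $\max\{\epsilon_1,\epsilon_2\}$.

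The step I expect to be the main obstacle is justifying the decoder packing bounds even though the encoder is deterministic: the selection $(j^*,k^*)$ is a function of the whole codebook realization, and this conditioning could in principle corrupt the independence required by the change-of-measure estimates. My plan is to pull the covering-failure event out of the error bound first via Lemma~\ref{lem5}, and then verify that on the complementary event each codeword not selected by the encoder remains (marginally, or conditionally on $u^*$) independent of $Y_1$ and $Y_2$, so the per-pair exponents $\exp(-\log(M\tilde{N})-\gamma)$ and $\exp(-\log\tilde{N}-\gamma)$ survive intact; once these independences are traced carefully through the selection rule, the remainder of the analysis is routine algebra.
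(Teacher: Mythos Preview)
Your packing (false-alarm) analysis is essentially fine, and the concern you flag as the ``main obstacle'' is not actually the problem: the codewords with a wrong $m$, or with the correct $m$ but a wrong $n$-index, are never touched by the encoder's search (which ranges only over the $\hat{N}\times\hat{L}$ sub-grid attached to the true $(m,a,b)$), so their independence from $Y_1$ is immediate.

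The real gap is on the \emph{missed-detection} side, and it stems from your choice of encoder. You have the encoder pick the first $(j^*,k^*)$ whose triple lands in the admissible set $\{\imath_{S;T|U}\le\log(\hat{N}\hat{L})-2\gamma\}$. Conditioned on this selection, the pair $(S^*,T^*)$ is \emph{not} distributed as $P_{ST|U=u^*}$; it is biased toward the admissible set. Consequently $(U^*,S^*,T^*,Y_1,Y_2)$ does not have law $P_{USTY_1Y_2}$, and you cannot simply say that ``the transmitted pair's failing either threshold contributes events~1 and~3.'' There is no one-shot substitute for the asymptotic argument that a jointly typical selected pair is approximately $P_{ST|U}$-distributed. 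Even if you could somehow repair this, your scheme would yield $\mathbb{P}[\text{event 5}]+\mathbb{P}[\text{events 1--4}]$ rather than the single union probability in the statement.

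The paper avoids both problems with a different encoder and a different application of Lemma~\ref{lem5}. The encoder does not test a threshold; it chooses $(\hat a,\hat b)$ to \emph{minimize}
\[
\zeta(u,s,t):=P_{Y_1Y_2|X=x(u,s,t)}(\mathcal{Y}_{u,s,t}),
\]
the conditional probability that either decoder misses detection. Then $\mathbb{P}[\mathcal{E}_1\cup\mathcal{E}_3]\le\mathbb{E}\,\zeta(U_1,S^*,T^*)=\int_0^1\mathbb{P}[\min_{\hat n,\hat l}\zeta>v]\,dv$, and for each $v$ the event $\{\min\zeta>v\}$ is exactly a mutual-covering failure with target set $\mathcal{F}_v=\{\zeta\le v\}$. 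Lemma~\ref{lem5} is applied to \emph{these} level sets (not to the $\imath_{S;T|U}$ threshold), and integrating the resulting bound in $v$ converts $\{(U,S,T)\notin\mathcal{F}_v\}$ back into $\zeta(U,S,T)$, which is what fuses events~1--4 with event~5 into a single union. This minimizing-encoder plus layer-cake trick is the missing idea.
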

As in \cite[Theorem~8.4]{el2011network}, the private messages are decomposed into a public part and an individual part:
\begin{align}
W_i=(W_{i0},W_{ii})
\end{align}
where $i=1,2$ and $W_{i0}$ is supposed to be decodable by both users.

\begin{proof}
\begin{itemize}
  \item Codebook Generation: generate
  \begin{align}
  \mathbf{u}=[\textsf{u}_1,\dots,\textsf{u}_M]
  \end{align}
  according to the distribution $P_U^{\otimes M}$. Also for each $1\le i\le M$, generate
  \begin{align}
  \mathbf{s}_i=[\textsf{s}_i(n,\hat{n})]_{1\le n\le N,1\le \hat{n}\le \hat{N}}
  \end{align}
  according to $P_{S|U=\textsf{u}_i}^{\otimes N\hat{N}}$ and
  \begin{align}
  \mathbf{t}_i=[\textsf{t}_i(l,\hat{l})]_{1\le l\le L,1\le \hat{l}\le \hat{L}}
  \end{align}
  according to $P_{T|U=\textsf{u}_i}^{\otimes L\hat{L}}$. (In other words, for each $i$ we construct a codebook similar to \cite[Figure~8.8]{el2011network}, where each small rectangle has size $\hat{L}\times\hat{N}$.)

  \item Encoding: we may assume that the public message is $w_0$ and the private messages for the two users are $(w_{10},a)$ and $(w_{20},b)$ respectively, where
      \begin{align}
      w_0&\in \{1,\dots,M_0\}
      \\
      w_{i0}&\in \{1,\dots,M_{i0}\}
      \\
      a&\in\{1,\dots,N\}
      \\
      b&\in\{1,\dots,L\}
      \end{align}
      for $i=1,2$. Then the index $m=\texttt{m}(w_0,w_{10},w_{20})$ is selected for the lower-layer codebook where $\texttt{m}$ is a bijection between $[M_0]\times[M_{10}]\times[M_{20}]$ and $[M]$.
%      \begin{align}
%      &\texttt{m}:[M_0]\times[M_{10}]\times[M_{20}]\to [M]
%      \\
%      &(w_0,w_{10},w_{20})\mapsto M_{10}M_{20}(w_0-1)+M_{20}(w_{10}-1)+w_{20}.
%      \end{align}
      The encoder then finds $\hat{a}\in\{1,\dots,\hat{N}\}$ and $\hat{b}\in\{1,\dots,\hat{L}\}$ that minimize
      \begin{align}\label{ezeta}
      \zeta(\textsf{u}_m,\textsf{s}_m(a,\hat{a}),\textsf{t}_m(b,\hat{b}))
      \end{align}
      where
      \begin{align}\label{e_56}
      \zeta(u,s,t):=P_{Y_1Y_2|X=x(u,s,t)}(\mathcal{Y}_{u,s,t})
      \end{align}
      and
      \begin{align}
      \mathcal{Y}_{u,s,t}&:=\{(y_1,y_2)\in\mathcal{Y}_1\times\mathcal{Y}_2:
      \imath_{US;Y_1}(us;y_1)\le \log M\tilde{N}+\gamma
      \nonumber
      \\
      &\textrm{ or }
      \imath_{UT;Y_2}(ut;y_2)\le \log M\tilde{L}+\gamma
      \nonumber
      \\
      &\textrm{ or }
      \imath_{S;Y_1|U}(s;y_1|u)\le \log \hat{N}+\gamma
      \nonumber
      \\
      &\textrm{ or }
      \imath_{T;Y_2|U}(t;y_2|u)\le \log \hat{L}+\gamma
      \}
      \end{align}
      The signal transmitted is then $x(\textsf{u}_m,\textsf{s}_m(a,\hat{a}),\textsf{t}_m(b,\hat{b}))$.

      \item Decoder: assume that $y_1$ and $y_2$ are observed by the two receivers, respectively. The decoder of the first receiver (Decoder~1) finds the unique $m$ such that
      \begin{align}
      \exists d,\hat{d},\quad\imath_{US;Y_1}(\textsf{u}_m\textsf{s}_m(d,\hat{d});y_1)>\log M\tilde{N}+\gamma
      \end{align}
      and set $(m_0,m_{10},m_{20})=\texttt{m}^{-1}(m)$, or declares an error if not possible. Decoder~1 then finds the unique $(c,\hat{c})$ such that
      \begin{align}
      \imath_{S;Y_1|U}(\textsf{s}_m(c,\hat{c});y_1|\textsf{u}_m)>\log \tilde{N}+\gamma
      \end{align}
      or declares an error if not possible. The output public message is then $m_0$ and the output private message is $(m_{10},c)$.

      Decoder~2 performs similar operations as Decoder~1.

      \item Error Analysis: By symmetry of the codebook, we may assume without loss of generality that
        \begin{align}
        w_0=w_{10}=w_{20}=w_{11}=w_{22}=1
        \end{align}
        is sent and $\texttt{m}(1,1,1)=1$. Also it suffices to prove the bound for Decoder~1 only in view of the symmetry of the bound \eqref{ebd}. Assume that $(\hat{a},\hat{b})$ minimizes \eqref{ezeta} (and is selected by the encoder). Decoder~1 fails only if one or more of the following events occur:
        \begin{align}
        \mathcal{E}_1:&\quad \imath_{US;Y_1}(\textsf{u}_1\textsf{s}_1(1,\hat{a});y_1)\le \log M\tilde{N}+\gamma;
        \\
        \mathcal{E}_2:&\quad \exists m\neq 1,d,\hat{d},
        \nonumber
        \\
        &\quad\imath_{US;Y_1}(\textsf{u}_m\textsf{s}_m(d,\hat{d});y_1)>\log M\tilde{N}+\gamma;
        \\
        \mathcal{E}_3:&\quad \imath_{S;Y_1|U}(\textsf{s}_1(1,\hat{a});y_1|\textsf{u}_1)\le \log \tilde{N}+\gamma;
        \\
        \mathcal{E}_4:&\quad \exists n\neq 1,\hat{n},\quad \imath_{S;Y_1|U}(\textsf{s}_1(n,\hat{n})
        ;y_1|\textsf{u}_1)>\log\tilde{N}+\gamma.
        \end{align}

        Denote by $S^*$ and $T^*$ the coefficients selected by the encoder and $Y_1^*$, $Y_2^*$ the corresponding outputs. Averaged over the codebook, we can bound
        \begin{align}
        &\quad\mathbb{P}[\mathcal{E}_2]
        \nonumber
        \\
        &=\mathbb{P}
        [\bigcup_{m\neq1,d,\hat{d}}\{\imath_{US;Y_1}(U_mS_m(d,\hat{d});Y_1^*)
        >\log M\tilde{N}+\gamma\}]
        \\
        &\le (M-1)\tilde{N}\mathbb{E}\mathbb{P}[\imath_{US;Y_1}(U_2S_2(1,1);Y_1^*)
        >\log M\tilde{N}+\gamma|Y_1^*]
        \\
        &\le \frac{M-1}{M\exp(\gamma)}\label{ee2}
        \end{align}
        by change of measure, where the joint probability of $\{U_m,S_m(d,d)\}_{m,d,\hat{d}}$ and $Y_1^*$ satisfies $P_{U_mS_m(d,\hat{d})Y_1^*}=P_{US}\times P_{Y_1^*}$. Similarly
        \begin{align}
        &\quad\mathbb{P}[\mathcal{E}_4]
        \nonumber
        \\
        &=\mathbb{P}[\bigcup_{n\neq1,1\le\hat{n}\le\hat{N}}
        \{\imath_{S;Y_1|U}(S_1(n,\hat{n});Y_1^*|U_1)>\log \tilde{N}+\gamma\}]
        \\
        &\le (N-1)\hat{N}\mathbb{E}\mathbb{P}[\imath_{S;Y_1|U}(S_1(2,1);Y_1^*|U_1)
        \nonumber
        \\
        &\quad\quad\quad\quad\quad\quad\quad>\log \tilde{N}+\gamma|Y_1^*U_1]
        \\
        &\le \frac{(N-1)\hat{N}}{\tilde{N}\exp(\gamma)}\label{ee4}
        \end{align}
        where $P_{S_1(n,\hat{n})Y_1^*U_1}=P_{U}P_{S|U}P_{Y_1^*|U}$ for $n\neq1$. Next, notice that
        \begin{align}
        \mathbb{P}[\mathcal{E}_1\cup\mathcal{E}_3]&\le\mathbb{E}\zeta(U_1,S^*,T^*)
        \\
        &=\int_0^1\mathbb{P}[\zeta(U_1,S^*,T^*)>v]{\rm d}v.\label{e73}
        \end{align}
        Note that $(S^*,T^*)$ has a complicated distribution since the coefficients are selected through the minimization of \eqref{ezeta}. To tackle this, let
        \begin{align}\label{eft}
        \mathcal{F}_v:=\{(u,s,t)\in\mathcal{U}\times\mathcal{S}\times\mathcal{T}:\zeta(u,s,t)\le v\}.
        \end{align}
        Then we can bound
        \begin{align}
        &\quad\mathbb{P}[\zeta(U_1,S^*,T^*)>v]
        \nonumber
        \\
        &=\mathbb{P}[\bigcap_{1\le\hat{n}\le \hat{N},1\le\hat{l}\le \hat{L}}
        \{(U_1,S_1(1,\hat{n}),T_1(1,\hat{l}))\notin\mathcal{F}_v\}]
        \\
        &\le \mathbb{P}[(U,S,T)\notin \mathcal{F}_v\cup\mathcal{S}]\nonumber
        \\
        &\quad+\frac{\min\{\hat{N},\hat{L}\}-1}{\hat{N}\hat{L}(\exp(-\gamma)-\exp(-2\gamma))}
        +e^{-\exp(\gamma)}\label{em}
        \end{align}
        where we invoked Lemma~\ref{lem5} in \eqref{em}, and defined the set
        \begin{align}
        \mathcal{S}:=\{(u,s,t):\imath_{S;T|U}(s;t|u)>\log\hat{N}\hat{L}-2\gamma\}.
        \end{align}
        Combining \eqref{e73} and \eqref{em} we have
        \begin{align}
        &\quad\mathbb{P}[\mathcal{E}_1\bigcup\mathcal{E}_3]
        \nonumber
        \\
        &\le \int_0^1\mathbb{P}[(U,S,T)\notin \mathcal{F}_v\cup\mathcal{S}]
        {\rm d}v
        \nonumber
        \\
        &\quad+\frac{\min\{\hat{N},\hat{L}\}-1}{\hat{N}\hat{L}(\exp(-\gamma)-\exp(-2\gamma))}
        +e^{-\exp(\gamma)}\label{e77}
        \end{align}
        but by definitions \eqref{e_56} and \eqref{eft},
        \begin{align}
        &\quad\int_0^1\mathbb{P}[(U,S,T)\notin \mathcal{F}_v\cup\mathcal{S}]{\rm d}v
        \nonumber
        \\
        &=\int_0^1\mathbb{E}1\{(U,S,T)\in\mathcal{F}_v^c\cup\mathcal{S}\}{\rm d}v
        \label{e_79}
        \\
        &=\mathbb{E}[1\{(U,S,T)\in\mathcal{S}\}+\zeta1\{(U,S,T)\notin\mathcal{S}\}]
        \\
        &=\mathbb{P}[(U,S,T)\in\mathcal{S}]
        \nonumber
        \\
        &\quad+\int{\rm d}P_{UST}P_{Y_1Y_2|UST}(\mathcal{Y}_{UST})1\{(U,S,T)\notin\mathcal{S}\}
        \\
        &=\mathbb{P}[(U,S,T)\in\mathcal{S}]
        \nonumber
        \\
        &\quad+\int{\rm d}P_{Y_1Y_2UST}1\{(Y_1,Y_2)\in\mathcal{Y}_{UST}
        \cup(U,S,T)\notin\mathcal{S}\}
        \\
        &=\mathbb{P}[(U,S,T)\in\mathcal{S}\cup(Y_1,Y_2)\in\mathcal{Y}_{UST}]
        \\
        &=\mathbb{P}[\{\imath_{S;T|U}(S;T|U)>\log\hat{N}\hat{L}-2\gamma\}
        \nonumber
        \\
        &\quad\cup\{\imath_{US;Y_1}(US;Y_1)\le \log M\tilde{N}+\gamma\}
        \nonumber\\
        &\quad\cup\{\imath_{UT;Y_2}(UT;Y_2)\le \log M\tilde{L}+\gamma\}
        \nonumber
        \\
        &\quad\cup\{\imath_{S;Y_1|U}(S;Y_1|U)\le \log \tilde{N}+\gamma\}
        \nonumber
        \\
        &\quad\cup\{\imath_{T;Y_2|U}(T;Y_2|U)\le \log \tilde{L}+\gamma\}].\label{e80}
    \end{align}
    Finally, the proof is accomplished by substituting \eqref{e80} into \eqref{e77} and then applying the union bound with \eqref{ee2}, \eqref{ee4} and \eqref{e77}.
\end{itemize}
\end{proof}

\begin{rem}
In the i.i.d.~setting Theorem~\ref{thm1} readily gives the following achievable region
\begin{align}
         \left\{
         \begin{array}{rl}
         R_{ii}&\le R_i
         \\
         \hat{R}_1+\hat{R}_2&\ge I({\sf U}_1;{\sf U}_2|{\sf U}_0)
         \\
         \sum_i(R_i-R_{ii})+R_0+R_{ii}+\hat{R}_i&\le I({\sf U}_0{\sf U}_i;{\sf Y}_i)
         \\
         R_{ii}+\hat{R}_i&\le I({\sf U}_i;{\sf Y}_i|{\sf U}_0)
         \end{array}
                  \right.
\end{align}
for some $R_{ii},\hat{R}_i>0$, $i=1,2$, $P_{{\sf U}_0{\sf U}_1{\sf U}_2}$ and function $x:\mathcal{U}_0\times\mathcal{U}_1\times\mathcal{U}_2\to\mathcal{X}$.
Fourier-Motzkin elimination gives the same region as in \cite{liang2007rate}; see also \cite[Theorem~8.4]{el2011network}.
\end{rem}

%\section{Further Remarks}
%Without single shot bounds, derivations for the achievable region may take a lot work even for simple models such as Gaussian broadcast channels with intersymbol interference, c.f.~\cite[Appendix A]{goldsmith2001capacity}.
%
%The derivation of the mutual covering lemma in Section~\ref{sec1} is asymmetric in nature, that is the order of $U$ and $V$ been unconditioned affects the final form of the bound. Therefore it is not immediately clear if the lemma can be generalized to the multivariate case (c.f.~\cite[p218 Lemma~8.2]{el2011network}) and what would be the optimal order of unconditioning the random variables.

\section{Discussion}\label{sec_dis}
In contrast to the one-shot mutual covering lemma, a one-shot version of mutual \emph{packing} lemma \cite[Lemma~12.2]{el2011network} holds trivially, because the key step union bound in the proof of \cite[Lemma~2]{verdu2012non} does not require independence among the pairs.
\begin{lem}
Fix $(P_{XY},M,N,\gamma)$, then
\begin{align}
\mathbb{P}[\max_{m,n}\imath_{X;Y}(X_m;Y_n)\ge \log MN+\gamma]\le \exp(-\gamma)
\end{align}
where for all $1\le m\le M$, $1\le n\le N$, $P_{X_mY_n}=P_X\times P_Y$.
\end{lem}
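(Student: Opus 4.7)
The plan is to reduce the claim to the standard one-shot packing argument of \cite[Lemma~2]{verdu2012non} via a union bound followed by a change of measure, exactly as hinted by the paragraph preceding the statement. The crucial observation---and the whole reason the mutual version is trivial---is that a union bound is agnostic to the joint dependence among the $MN$ pairs, so the hypothesis $P_{X_mY_n}=P_X\times P_Y$ on the marginals alone is enough; no product structure on the codebook is invoked.

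First I would upper bound the probability that the maximum exceeds $\log MN+\gamma$ by a union over all $(m,n)$, producing $MN$ identical terms of the form $\mathbb{P}[\imath_{X;Y}(\bar X;\bar Y)\ge \log MN+\gamma]$ with $(\bar X,\bar Y)\sim P_X\times P_Y$. Then, using the Radon--Nikodym identity $\frac{{\rm d}(P_X\times P_Y)}{{\rm d}P_{XY}}=\exp(-\imath_{X;Y})$, I would change measure to $P_{XY}$; on the event $\{\imath_{X;Y}\ge \log MN+\gamma\}$ the corresponding integrand is bounded above by $(MN\exp(\gamma))^{-1}$, so each term is at most $\exp(-\gamma)/(MN)$. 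Summing the $MN$ identical contributions gives exactly $\exp(-\gamma)$.

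There is no real obstacle: every step is a one-liner and the argument is essentially that of \cite[Lemma~2]{verdu2012non} with $M$ replaced by $MN$. The only point worth flagging---and the reason the statement is included at all---is the asymmetry between packing and covering in the one-shot regime: the union-bound step inherits no independence requirement from the union, whereas the covering counterpart estimates expectations of products of indicators over the codebook and therefore genuinely needs the extra machinery developed in Sections~\ref{sec1} and \ref{sec_3}.
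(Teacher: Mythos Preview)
Your proposal is correct and matches the paper's approach: the paper does not spell out a proof but states that the lemma follows trivially because the union-bound step of \cite[Lemma~2]{verdu2012non} does not require independence among the pairs, which is precisely the argument you give. The only cosmetic simplification is that your change-of-measure step is equivalently (and slightly more directly) a Markov inequality on $\exp(\imath_{X;Y})$ under $P_X\times P_Y$.
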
This can be used to derive a one-shot version of Berger-Tung inner bound without time sharing or stochastic decoders.

\section*{Acknowledgment}
This work was supported by NSF under Grants CCF-1350595, CCF-1116013, CCF-1319299, CCF-1319304, and the Air Force Office of Scientific Research under Grant FA9550-12-1-0196.

\bibliographystyle{ieeetr}
\bibliography{refm}

\end{document}